\newtheorem{theorem}{Theorem}[section]
\newtheorem{lemma}[theorem]{Lemma}
\theoremstyle{definition}
\title[Internal equatorial waves of extreme form] 
      {Geophysical internal equatorial waves of extreme form}
\author[Tony Lyons]{}
\subjclass{Primary: 76B55, 76B15; Secondary: 35Q31.}
 \keywords{Equatorial flows, geophysical internal waves, extreme waves.}
 \email{tlyons@wit.ie}
\thanks{The paper is for the special theme: Mathematical Aspects of Physical Oceanography, organized by Adrian Constantin.}
\begin{document}
\maketitle

\centerline{\scshape Tony Lyons}
\medskip
{\footnotesize
 \centerline{Department of Computing \& Mathematics}
   \centerline{Waterford Institute of Technology}
   \centerline{Waterford, Ireland}
}

\begin{abstract}
The existence of internal geophysical waves of extreme form is confirmed and an explicit solution presented. The flow is confined to a layer lying above an eastward current  while the mean horizontal flow of the solutions is westward, thus incorporating flow reversal in the fluid.
\end{abstract}

\section{Introduction}
 Hydrodynamic waves of extreme form have played a prominent role in the mathematical study of water waves since the 19th century, when Stokes \cite{Stokes1880} first predicted the existence of the wave of greatest height. Extreme waves are inherently nonlinear, for instance extreme Stokes waves are solutions of full Euler equations for inviscid,  incompressible fluids, and their existence (cf. \cite{AFT1982}) relies on the nonlinear character of these governing equations. Moreover, extreme waves incorporate numerous mathematical intricacies \cite{BT2003, Toland1996}, especially in regions close to the wave crest, which presents itself in the form of a cusp.  Nevertheless, numerous physical features of such flows have been analysed mathematically, for instance the particle trajectories and pressure in extreme Stokes waves in deep-water and water of finite depth may be found in  \cite{Constantin2012b,Lyons2014, Lyons2016a,Lyons2016b,Lyons2018}. In this work we will investigate internal geophysical flows of extreme form near the equator, and it will be found that the presence of a cusp at the wave crest of these flows once again introduces several mathematical difficulties. In particular, the necessary criteria to apply the implicit function theorem used to prove the existence of smooth geophysical flows are absent in the extreme case, while the cross-section of the wave about the equator also requires a more intricate analysis, both of which are addressed in this work.

The mathematical analysis of geophysical waves in the equatorial region has recently witnessed an explosion of research activity. A significant motivation for this interest has been the fact that such waves have an exact description in a variety of settings in the form of equatorially trapped  Gerstner like solutions, see \cite{Henry2018} for a review of three-dimensional Gerstner like solutions in the equatorial region.  Equatorially trapped surface waves have been investigated in \cite{Con2012,ConstantinJohnson2017, CM2017} while surface waves in the presence of underlying currents have been treated in \cite{Henry2013, HenrySastre2015}.  Internal geophysical flows are investigated in \cite{Constantin2013, CompelliIvanov2015,Matioc2013,Martin2015, RK2017, Klu2017}, while instabilities in such flows are addressed in \cite{CG2013, Ionescu2018,GD2014, HenryHsu2015}.

The waves we shall consider are inherently nonlinear and while this nonlinearity may preclude an exact analysis of their behaviour in many instances, this is not strictly the case, with the Gerstner solution \cite{Gerstner1809} being one of the few known exact solutions of the governing equations for incompressible, inviscid fluids in deep-water. The Gerstner wave was subsequently rediscovered by Rankine \cite{Rankine1863} and of particular interest it was extended to heterogeneous fluids by Dubreil-Jacotin \cite{DubreilJacotin1932}. A more modern perspective of Gerstner's wave is found in \cite{Constantin2001,Con2011,Henry2008} where the evolution of the fluid domain under the action of this solution is analysed using a combination of analytical and topological methods.

Of particular significance to the current work, Gerstner's solution may also be extended to hydrodynamic systems described in rotating coordinate systems. Gerstner-like solutions were initially used in \cite{Constantin2012a,Constantin2013} to model three-dimensional equatorially trapped geophysical flows and in \cite{Con2014} a Gerstner-like solution was constructed as an exact solution for internal geophysical water-waves near the equator in the $\beta$-plane formulation. This explicit solution is constructed in terms of Lagrangian variables (see \cite{Bennett2006} for a comprehensive review of the Lagrangian approach to fluid dynamics), and it is found that these waves may propagate in an eastward direction and are also found to decay significantly in the meridional direction away from the equator. Moreover, this Gerstner-like solution exists in a layer of finite depth, lying directly above a uniform current layer, which in turn lies above a transitional layer where the uniform current decays to zero when the motionless layer begins. The uniform current layer and the transitional layer are both subject to eastward currents, while it was shown in \cite{Con2014} that the Gersner-like solution induces a mean current towards the west, and so the model accounts for flow reversal, as observed in the Pacific equatorial undercurrent.

Notably, the results presented in \cite{Constantin2012a,Constantin2013, Con2014} concern nonlinear flows which are smooth throughout. In the current work we aim to extend the results presented in \cite{Con2014} to incorporate the case of nonlinear extreme waves in the Gerstner-like solution, which feature cusps at the wave-troughs along the equator in the thermocline, the interface between the Gerstner-like solution and the uniform current layer. The presence of these cusps leads to a number of mathematical complications, in particular it means the criteria necessary to confirm the existence of the extreme wave profile in the thermocline by means of the implicit function theorem are not satisfied along the equator. This issue is overcome in this work thus confirming the existence of extreme waves in  the thermocline. Based on this, we proceed to outline certain features  the meridional cross-section of the thermocline must display. Finally, we also confirm that current flow reversal across the thermocline  is preserved, even when it is disturbed to the form of an extreme Gerstner-like solution.

\section{Preliminaries}
\subsection{The governing equations in a rotating frame}
The rotation of the Earth about the polar axis induces a fictitious force, known as the Coriolis force, on all bodies away from the equator. In the following, the Earth is approximated to be an ideal sphere with radius $R=6731\, \mathrm{km}$ and rotating with angular velocity $\Omega=7.29\times10^{-5}\, \rm{rad\, s^{-1}}$ about the polar axis. The acceleration due to gravity is assumed uniform over Earth's surface with numerical value $g=9.8\, \rm{m\, s^{-2}}$.

\begin{figure}[h!]
\centering
\includegraphics[width=0.5\textwidth]{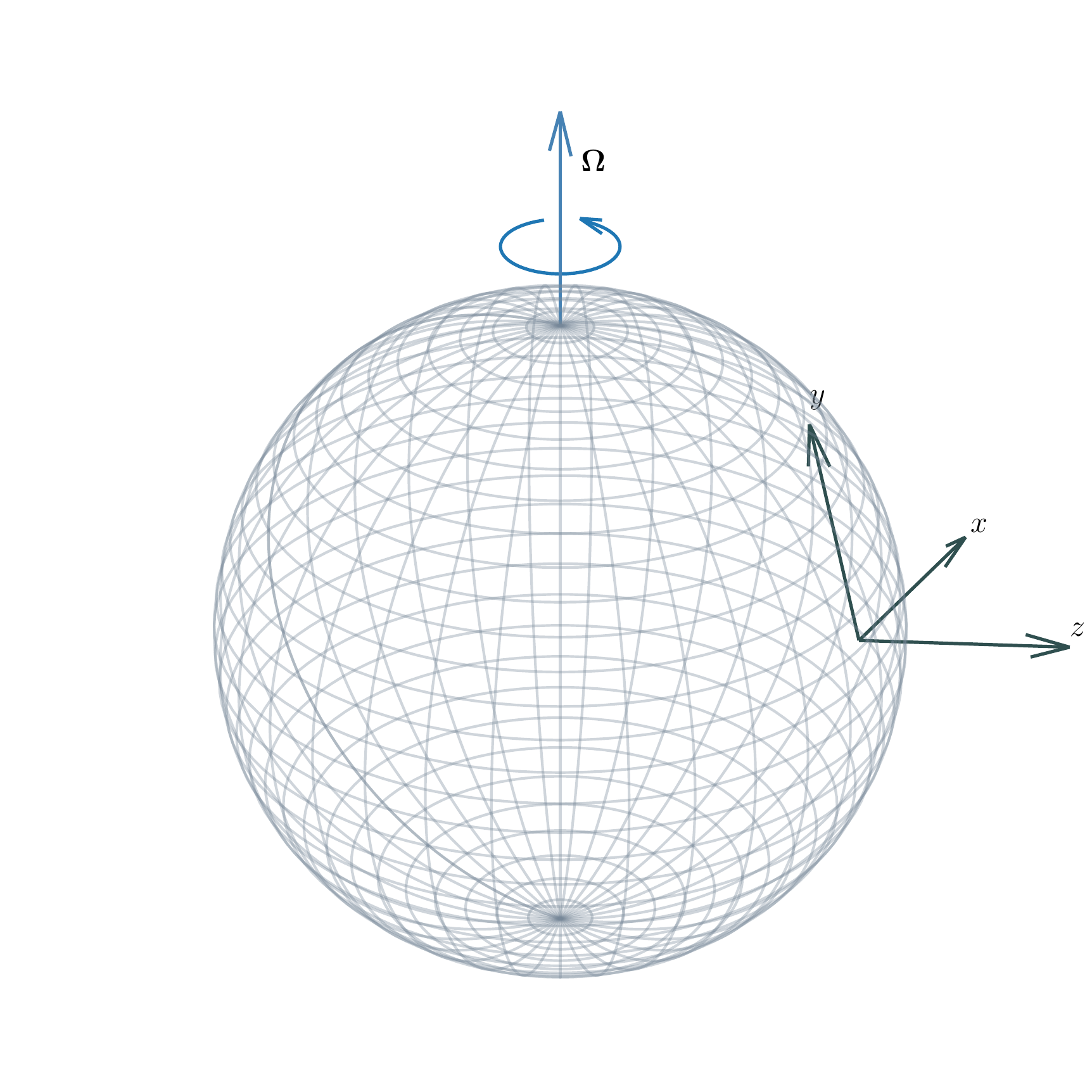}
\caption{The rotating $\left(x,y,z\right)$-coordinate system fixed to the surface of the  Earth. The $x$-axis points due-east, the $y$-axis points due-north and the $z$-axis points vertically upwards from the surface.}\label{fig1}
\end{figure}

In Figure \ref{fig1} we sketch the surface of the Earth with a Cartesian coordinate system, $(x,y,z)$ constructed at some point about the equator. The $x-$axis of this coordinate system points eastwards parallel to the equator, while the $y$-axis points due-north towards the pole. The $z$-axis is the line perpendicular to the surface of the Earth at this point. The latitude of the equator is set to be $0^{\circ}$, while the lines of latitude approximately $\pm5^{\circ}$ either side of the equator denote the boundary of the equatorial region.

In the following, we shall consider a fluid  in motion relative to the rotating $(x,y,z)$-coordinate axes with associated orthonormal basis vectors $\left\{\mathbf{e}_{x},\mathbf{e}_{y},\mathbf{e}_{z}\right\}$. This rotation is defined in relation to a reference frame which is stationary with respect to the centre of the Earth and whose orthonormal basis vectors we denote by $\left\{\mathbf{i},\mathbf{j},\mathbf{k}\right\}$. The coordinate transformation from the inertial to the rotating frame of reference is given by
\begin{equation}\label{s1eq1}
\begin{cases}
\mathbf{e}_{x}=-\sin\phi\mathbf{i}+\cos\phi\mathbf{j}\\
\mathbf{e}_{y}=-\sin\theta\cos\phi\mathbf{i}-\sin\theta\sin\phi\mathbf{j}+\cos\theta\mathbf{k}\\
\mathbf{e}_{z}=\cos\theta\cos\phi\mathbf{i}+\cos\theta\sin\phi\mathbf{j}+\sin\theta\mathbf{k}
\end{cases}
\end{equation}
with $\phi\in[0,2\pi]$ and $\theta\in\left[-\frac{\pi}{2},\frac{\pi}{2}\right]$ respectively denoting the zonal and meridional coordinates.

The inertial coordinate system may be chosen such that the rotation vector of the Earth is given by $\mathbf{\Omega}=\Omega\mathbf{k}$, where $\mathbf{k}$ is the unit vector projecting from the north pole along the polar axis. Inverting the transformation given by equation \eqref{s1eq1}, the rotation vector $\mathbf{\Omega}$ is given by
\begin{equation}\label{s1eq2}
\mathbf{\Omega}=\Omega\cos\theta\mathbf{e}_{y}+\Omega\sin\theta\mathbf{e}_{z}
\end{equation}
with respect to the rotating reference frame. Moreover, given any rotating orthonormal basis $\{\mathbf{e}_{x},\mathbf{e}_{y},\mathbf{e}_{z}\}$ with rotation vector $\mathbf{\Omega}$  it is always the case that the basis vectors satisfy
\begin{equation}\label{s1eq3}
\begin{aligned}
 \frac{d}{dt}{\mathbf{e}_{k}}=\mathbf{\Omega}\times\mathbf{e}_k,\quad \text{for }k\in\{x,y,z\}.
\end{aligned}
\end{equation}
and the reader is referred to \cite{Gol2002} for a comprehensive derivation of this formula. Thus a fluid particle following some trajectory
\[{\bf r}(t)=X(t)\mathbf{e}_{x}+Y(t)\mathbf{e}_{y}+Z(t)\mathbf{e}_{z},\]
will have a corresponding velocity and acceleration given by
\begin{equation}\label{s1eq4}
\mathbf{u}=\frac{\partial \mathbf{r}}{\partial t}+\mathbf{\Omega}\times\mathbf{r}\qquad\mathbf{a}=\frac{\partial \mathbf{u}}{\partial t}+2\mathbf{\Omega}\times\mathbf{u}+\mathbf{\Omega}\times\mathbf{\Omega}\times\mathbf{r}.
\end{equation}
The acceleration of a fluid element, $\mathbf{a}$, has two contributions due to fictitious forces  arising from the rotation of the Earth. The terms proportional to $\Omega^2$ correspond to the centripetal acceleration, which is the acceleration of a body away from the surface of the Earth due to the rotation of the surface. In practice, this term is negligible since it is effectively cancelled by the gravitational acceleration of the object, although, the effects of this force are observable in the deformation of the Earth away from a perfect sphere (see \cite{CRB2011} for further discussion). The term proportional to $\Omega$ is the result of the Coriolis force, and in the following we shall only consider the effect of this fictitious force.

\subsubsection{The Euler equation in rotating coordinates}
Given equation \eqref{s1eq4} and the effective cancellation of centripetal and gravitational forces on the surface of the Earth, it is found that the familiar material derivative is modified according to
\[\frac{\partial }{\partial t}+\mathbf{u}\cdot\nabla\to\frac{\partial }{\partial t}+\mathbf{u}\cdot\nabla+2\mathbf{\Omega}\times,\]
thus modifying the Euler equation according to
\begin{equation}\label{s1eq5}
\begin{cases}
 u_t+uu_x+vu_y+wu_z+2\Omega\cos\theta w-2\Omega\sin\theta v=-\frac{1}{\rho}P_{x}\\
 v_{t}+uv_{x}+vv_y+wv_z+2\Omega\sin\theta u=-\frac{1}{\rho}P_y\\
 w_{t}+uw_{x}+vw_y+ww_z-2\Omega\cos\theta u=-\frac{1}{\rho}P_z-g.
\end{cases}
\end{equation}
The hydrodynamic pressure in the fluid is represented by $P$ with the fluid density denoted by $\rho$, while $u$, $v$ and $w$ denote the velocity field components.

In the following, the meridional coordinate is restricted to the interval $-5^{\circ}<\theta<5^{\circ}$, and in this equatorial region  the approximation
\begin{equation*}\label{s1eq6}
\cos\theta\simeq 1\qquad \sin\theta\simeq\frac{y}{R}
\end{equation*}
may be used in equation \eqref{s1eq5}, cf. \cite{Con2012, Constantin2013,Con2014, ConstantinJohnson2017}. Introducing the parameter
\[\beta=\frac{2\Omega}{R}\]
this approximation yields the so-called $\beta$-plane formulation, given by
\begin{equation}\label{Euler-Beta}
\begin{cases}
  u_t+uu_x+wu_z+2\Omega w=-\frac{1}{\rho}P_{x}\\
  \beta y u=-\frac{1}{\rho}P_y\\
  w_{t}+uw_{x}+ww_z-2\Omega u=-\frac{1}{\rho}P_z-g,
\end{cases}
\end{equation}
where the velocity field is restricted by $v=0$ and $u_y=w_y=0$, meaning the equatorial flows under consideration are effectively restricted to zonal motion. While there is no motion in the meridional direction, this does not exclude variation in that direction, and in particular the depth of each interface between the layers of the flow will depend on latitude, thus making the flow three-dimensional. Incompressibility of the flow is ensured when
\begin{equation}\label{Incompressibility}
  u_{x}+v_{y}+w_{z}=0,
\end{equation}
while mass-conservation along the flow follows from
\begin{equation}\label{Continuity}
  \rho_{t}+\nabla\cdot(\mathbf{u}\rho)=0,
\end{equation}
and the system \eqref{Euler-Beta}--\eqref{Continuity} constitute the governing equations of the model. The boundary conditions coupled to these governing equations are of the form
\begin{equation}\label{BC}
\begin{aligned}
&\begin{rcases}
 P=P_0\\
 w=\eta_t+u\eta_x
\end{rcases}\quad\text{on }z=\eta(x,t)\\
&(u,w)=(0,0)\quad\text{for } z<-D+\frac{\kappa\beta}{4\Omega}y^2,
\end{aligned}
\end{equation}
where $\kappa>0$ is constant, and whose relevance will be clarified in section \ref{PhysicalFlow}. The first of these boundary conditions ensures the pressure on the free surface $z=\eta(x,t)$ is given by the constant atmospheric pressure $P_0$. The second boundary condition on this free surface implies that a fluid particle always remains on this surface. The final kinematic boundary condition ensures the fluid motion ceases below a certain depth, which demarcates the motionless layer.

\section{The flow about the equator}
The model resembles that presented in \cite{Con2014}, in that the waves occur in a fluid with two regions of different densities. The interface where the density changes is the thermocline, and it is this change in density which allows internal waves to arise. These internal waves propagate towards the east and move with constant wave-speed $c$ which will be prescribed by the dispersion relation in equation \eqref{wave-speed}. The waves, as already pointed out, have zero meridional  velocity, and occur beneath a surface layer $\mathcal{L}(t)$ whose behaviour is heavily influenced by wind effects. The behaviour of the fluid in this layer is not under consideration in this work. The {lower boundary} of the layer is described by the surface $z=\eta_{+}(x-ct,y)$, and beneath this boundary there are {four regions} wherein the fluid motion exhibits distinctive behaviour.

The upper most layer we investigate in this work is given by
\begin{equation}\label{Layer_M}
  \mathcal{M}(t)=\left\{(x,y,z):x\in\mathbb{R},y\in\mathbb{R},z\in\left(\eta_{0}(x-ct,y),\eta_{+}(x-ct,y)\right)\right\}.
\end{equation}
Wind effects on the internal flow in this work will be neglected for simplicity, however, it is known that atmospheric motion can play a role in the dynamics of internal flows in stratified layers, cf. \cite{ConstantinJohnson2017,Martin2015}.
The fluid density within this region is of constant value $\rho_{0}$, while the lower boundary of this region corresponds to the thermocline. Beneath the thermocline, the uniform flow layer is bounded from below by the surface
\[z=-d+\frac{\kappa\beta}{4\Omega}y^2\]
where $d$ is constant. In this region the fluid density increases slightly to $\rho_{+}>\rho_{0}$. The typical relative change in fluid density observed in the equatorial Pacific is of the order
\begin{equation}\label{relative_density}
\frac{\rho_{+}-\rho_{0}}{\rho_{0}}\approx 4\times 10^{-3},
\end{equation}
(see  \cite{KesslerMcPhaden1995}). Beneath this uniform flow layer, the transitional layer is bounded below by the surface
\[z=-D+\frac{\kappa\beta}{4\Omega}y^2,\]
wherein the fluid is again of constant uniform density $\rho_{+}$. The final layer, is unbounded below and corresponds to the motionless layer, wherein the fluid motion vanishes completely, and the fluid density is again $\rho_{+}$ throughout.
\begin{figure}[t!]
\includegraphics[width=\textwidth]{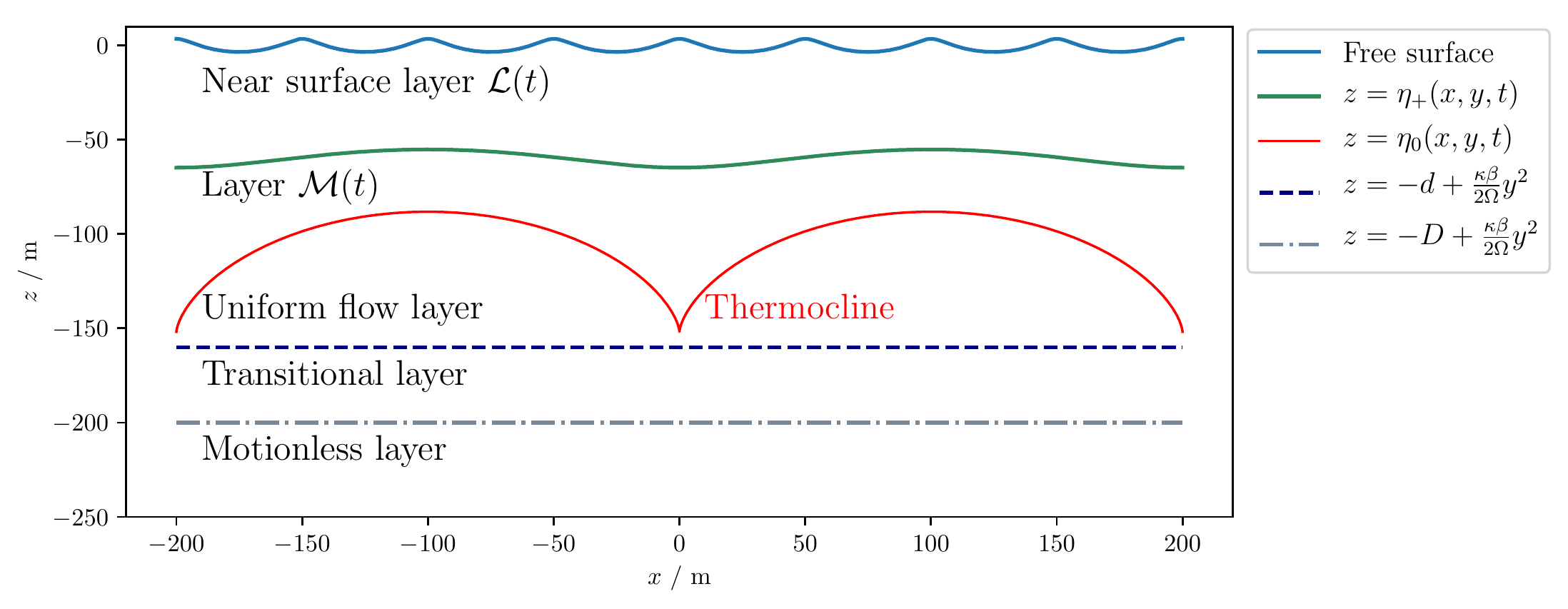}
\caption{A cross section of the flow in the equatorial plane $y=0$ with a flow wavelength of $L=200\,\mathrm{m}$. The average depth of the near surface layer is $60\,\mathrm{m}$, while the thermocline lies at an average depth of $120\,\mathrm{m}$. The transitional layer begins at $160\, \mathrm{m}$ approximately, while the motionless layer begins at $200\,\mathrm{m}$ beneath the free surface.}\label{layers}
\end{figure}

The layers of the fluid domain and the boundaries between those layers are shown in Figure \ref{layers}. In particular, the thermocline exhibits a cusp at the wave trough, which is one of the distinctive features of extreme waves.   In the case of extreme waves, the curve described by the thermocline is a cycloid, which is the path traced out by a particle at a fixed point on the circumference of a circle which rolls without slipping. In the case of smooth internal waves investigated in \cite{Con2014}, the thermocline wave profile is a trochoid, which is the path traced out by some point along the radius of a circle which rolls without slipping, cf. \cite{Con2011}.

\subsection{The layers beneath the thermocline}
The behaviour of the flow in the layers beneath the thermocline is identical to that in \cite{Con2014}, and we outline here the relevant features of these layers for the sake of completeness. The significant difference between the wave of extreme form and the internal waves previously investigated in that work occur in the layer $\mathcal{M}(t)$, in particular along the thermocline, due to the presence of cusps at the wave-trough as illustrated in Figure \ref{layers}.

\subsubsection{The motionless layer}
At any meridional location $y$ the upper boundary of the motionless layer is given by
\[z=-D+\frac{\kappa\beta}{4\Omega}y^2,\]
and beneath this boundary the fluid is still with $u=v=w=0$. In this case, the hydro-static pressure is given by
\begin{equation}\label{Pressure_D}
  P=P_0-\rho_+gz\quad\text{when }z\leq-D+\frac{\kappa\beta}{4\Omega}y^2
\end{equation}
The constant parameter $D$ is the depth of the upper boundary of this layer, as measured along the equator $y=0$.

\subsubsection{The transitional layer}
The upper boundary of this region is given by the surface
\[
z=-d+\frac{\kappa\beta}{4\Omega}y^2,
\]
where the constant parameter $d$ is the depth of the boundary as measured along the equator. Within this region the fluid is motionless in the vertical and meridional directions, however, the fluid does move in the zonal direction, with a velocity which depends on the depth and the meridional location. Specifically, the horizontal velocity in this layer is given by
\begin{equation}\label{s2eq3}
  \begin{aligned}
    u(x,y,z)=\frac{c}{D-d}\left(z-\frac{\kappa\beta}{4\Omega}y^2+D\right),\quad v=0,\quad w=0.
  \end{aligned}
\end{equation}
The vertical component of the Euler equation within this transitional layer now becomes
\begin{equation}\label{s2eq4}
  \frac{2\Omega c\rho_{+}}{D-d}\left(z-\frac{\kappa\beta}{4\Omega}y^2+D\right)=P_{z}+\rho_{+}g,
\end{equation}
and upon integrating we have
\begin{equation}\label{Pressure_C}
  \begin{aligned}
  P&=P\vert_{z=-D+\frac{\kappa\beta}{4\Omega}y^2}+\left[-\rho_{+}gz'+\frac{2\Omega c\rho_+}{D-d}\left(\frac{z'^2}{2}-\frac{\kappa\beta}{4\Omega}z'y^2+Dz'\right)\right]^{z}_{-D+\frac{\kappa\beta}{4\Omega}y^2}\\
  &=P_{0}-\rho_{+}gz+\frac{\Omega c\rho_{+}}{D-d}\left(z+D-\frac{\kappa\beta}{4\Omega}y^2\right)^2.
  \end{aligned}
\end{equation}
The pressure on the lower boundary of this region, $P\vert_{z=-D+\frac{\kappa\beta}{4\Omega}y^2}$, is deduced from the continuity of the pressure across the interface of two layers in the fluid domain, along with the expression \eqref{Pressure_D} for the pressure in the motionless layer.

\subsubsection{The uniform flow layer}
The upper boundary of the uniform flow layer corresponds to the thermocline
\[z=\eta_{0}(x-ct,y),\]
where $\eta_0$ is an eastward moving wave-form, moving with constant speed $c$ along any fixed meridional line $y$. Within the layer
\[-d+\frac{\kappa\beta}{4\Omega}y^2\leq z<\eta_0(x-ct,y)\]
the flow is uniform, given by
\begin{equation}\label{s2eq6}
 u=c,\quad v=0,\quad w=0.
\end{equation}
Replacing this velocity field in the Euler equation \eqref{Euler-Beta}, we deduce that
\begin{equation}\label{Pressure_B}
  P(x,y,z)=P_0-\rho_+gz+2\rho_{+}\Omega c\left(z+\frac{D+d}{2}-\frac{\kappa\beta}{4\Omega}y^2\right),
\end{equation}
having integrated each component and imposed equation \eqref{Pressure_C} on the lower boundary $z=-d+\frac{\kappa\beta}{4\Omega}y^2$.

\subsection{The layer above the thermocline}

The layer immediately above the thermocline, given by
\[\eta_{0}(x-ct,y)\leq z\leq \eta_{+}(x-ct,y)\]
is fully dynamic, in that the velocity field does not reduce to a simplified form as in all the layers beneath the thermocline. In this layer, it is convenient to analyse the flow in terms of Lagrangian coordinates, $(q,s,r)$, which serve to label the individual fluid particles within this layer. The flow in this layer is described in parametric form by
\begin{equation}\label{Gerstner}
\begin{cases}
  X(t;q,s,r)=q-\frac{1}{k}e^{-k\left(r+f(s)\right)}\sin(k(q-ct))\\
  Y(t;q,s,r)=s\\
  Z(t;q,s,r)=-d_{0}+r-\frac{1}{k}e^{-k(r+f(s))}\cos(k(q-ct)),
\end{cases}
\end{equation}
where  $L=\frac{2\pi}{k}$ is the wavelength of the flow, while the wave speed $c$ is deduced from the continuity of the pressure across the thermocline.
\begin{figure}[h!]
\centering
\includegraphics[width=\textwidth]{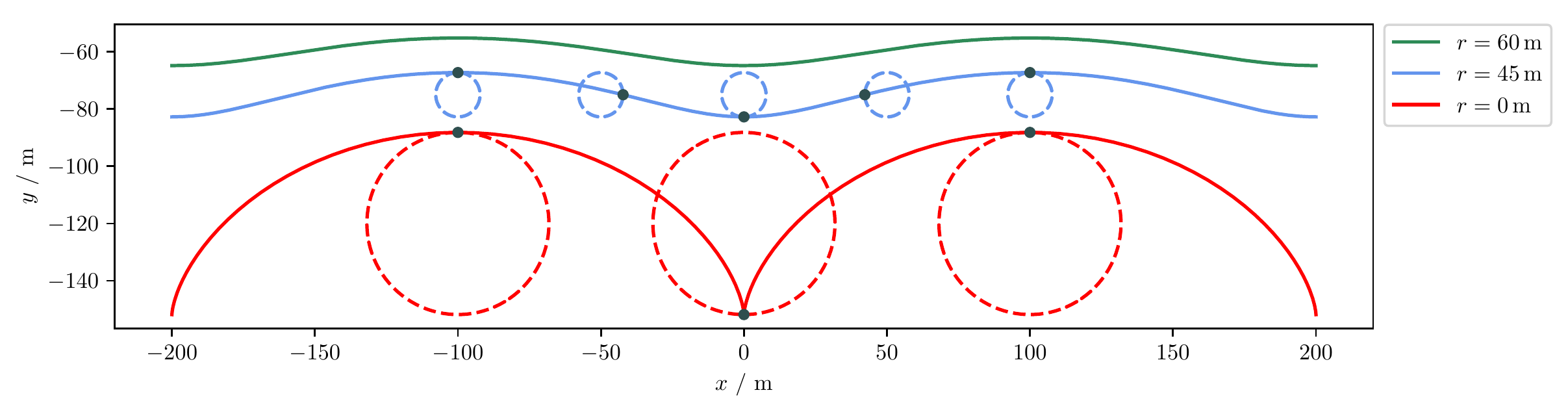}
\caption{The circular paths in the equatorial plane $y=0$ traced by particles in the layer $\mathcal{M}(t)$ are circles with centre $(q,r-d_0)$ whose radius increases with depth. The wavelength of the solution shown is $L=200\ \mathrm{m}$. The particle trajectories are counterclockwise and the particle positions shown in the figure are at time $t=0.$}\label{ParticlePaths}
\end{figure}
The function $f:=f(s)$ controls the decay of the fluid disturbance away from the equator which corresponds to the line $s=0$.
The parameter space variables
\[(q,s,r)\in\mathbb{R}\times(-s_0,s_0)\times(r_0(s),r_+(s))\]
may be interpreted as the initial locations of a motionless fluid body prior to being disturbed by a remotely generated wave propagating in the form \eqref{Gerstner}. The functions $r_0(s)$ and $r_{+}(s)$ characterise the surfaces $z=\eta_0(x,y,t)$ and $z=\eta_+(x,y,t)$ respectively, while $s_0\leq 250\, \mathrm{km}$ is the typical meridional extent of such disturbances (cf. \cite{Con2014}). Under the action of this flow, the particle labelled $(q,s,r)$ describes a circular path in the plane of fixed $s$, with centre $(q,r-d_0)$ and radius $\frac{1}{k}e^{-k(r+f(s))}$ which decreases with depth, as illustrated in Figure \ref{ParticlePaths}.

The dynamic possibility of the flow \eqref{Gerstner} has been confirmed in the context of surface waves in the work \cite{Sastre2015}, and in the context of internal geophysical waves in \cite{Rodriguez2017}. In either regime, as time advances, the mapping \eqref{Gerstner} remains a diffeomorphism from the parameter space $(q,s,r)$ to the fluid domain when $r\geq 0$. Meanwhile, the wave profile described by \eqref{Gerstner}, at any fixed $(s,r,t)$ is characterised to a large extent by the radius $e^{-k(r+f(s))}$. Crucially, for this profile to correspond to the graph of a function it is necessary that $r+f(s)\geq0$, with the wave of extreme form emerging when $r+f(s)=0.$ When $r+f(s)>0$ the wave-form is trochoidal, while the case $r+f(s)<0$ yields a curve which intersects with itself and so does not correspond to the graph of a function, cf. \cite{Con2011, Con2014}.

Introducing the parameters
\[\xi=k(r+f(s))\quad\theta=k(q-ct),\]
the Jacobian matrix of the coordinate transform \eqref{Gerstner} is of the form
\begin{equation}\label{Jacobian_Matrix}
\frac{\partial(X,Y,Z)}{\partial(q,s,r)}=
\begin{bmatrix}
1-e^{-\xi}\cos\theta&0&e^{-\xi}\sin\theta\\
f_{s}e^{-\xi}\sin\theta&1&f_{s}e^{-\xi}\cos\theta\\
e^{-\xi}\sin\theta&0&1+e^{-\xi}\cos\theta
\end{bmatrix}		
\end{equation}
while corresponding Jacobian is given by
\begin{equation}\label{Jacobian}
J=1-e^{-2\xi}.
\end{equation}
We note that this Jacobian is time-independent indicating that the flow is volume preserving. In the case of smooth waves studied in \cite{Con2014}, the Jacobian satisfies $J>0$ at all points in the closure of the layer $\mathcal{M}(t)$, and this allows the use of the implicit function theorem to infer the existence of an associated profile $r_0(s)$. In contrast, for extreme waves  this strict inequality now becomes $J\geq0$, with the Jacobian vanishing on the thermocline and specifically along the equator, which gives rise to a number of mathematical complications. In particular it precludes the use of the implicit function theorem to confirm the existence of waves of extreme form propagating in the thermocline, an issue which is addressed in the Section \ref{existence}.

A particularly appealing feature of the Lagrangian formalism is that the fluid velocity at any point along the flow given in equation \eqref{Gerstner} follows directly from
\begin{equation}\label{velocity}
\begin{cases}
  u=\frac{DX}{Dt}=ce^{-\xi}\cos\theta\\
  v=\frac{DY}{Dt}=0\\
  w=\frac{DZ}{Dt}=-ce^{-\xi}\sin\theta.
\end{cases}
\end{equation}
The corresponding acceleration along the flow is similarly obtained, yielding
\begin{equation}\label{acceleration}
\begin{cases}
  \frac{Du}{Dt}=kc^2e^{-\xi}\sin\theta\\
  \frac{Dv}{Dt}=0\\
  \frac{Dw}{Dt}=kc^2e^{-\xi}\cos\theta.
\end{cases}
\end{equation}
In terms of the labelling variables $(q,s,r)$, the gradient of the hydrodynamic pressure $P$ is obtained via
\begin{equation}\label{Prerssure_variable_change}
\nabla_{(q,s,r)}P=\frac{\partial(X,Y,Z)}{\partial(q,s,r)}\nabla_{(x,y,z)}P,
\end{equation}
and so applying equations \eqref{Jacobian_Matrix}--\eqref{Prerssure_variable_change} to the $\beta$-plane approximation \eqref{Euler-Beta}, we find
\begin{equation}\label{Pressu_label_gradient}
\begin{cases}
P_{q}=-\rho_0(kc^2-2\Omega c+g)e^{-\xi}\sin\theta\\
P_{s}=-\rho_{0}(kc^2-2\Omega c)f_{s}e^{-2\xi}-\rho_{0}(\beta cs+f_sg)e^{-\xi}\cos\theta\\
P_{r}=-\rho_{0}(kc^2-2\Omega c)e^{-2\xi}-\rho_{0}(kc^2-2\Omega c+g)e^{-\xi}\cos\theta-\rho_{0} g.
\end{cases}
\end{equation}
Integrating the expression for  $P_q$ and comparing with the expression for $P_{s}$, we deduce that
\begin{equation}\label{Decay_function}
f(s)=\frac{\beta s^2}{2(kc-2\Omega)}.
\end{equation}
Moreover the pressure distribution, written in terms of the labelling variables, is given by
\begin{equation}\label{Pressure_A}
  P(q-ct,s,r)=\rho_{0}\frac{kc^2-2\Omega c}{2k}e^{-2\xi}-\rho_{0}gr+\rho_{0}\frac{kc^2-2\Omega c+g}{k}e^{-\xi}\cos\theta + \tilde{P}_{0},
\end{equation}
where $\tilde{P}_0$ is a constant.

\subsubsection{The wave phase speed}
Since the thermocline $z=\eta_{0}(x,y,t)$ is identified with the parameterised surface $r=r_{0}(s)$, the pressure in the uniform current layer given by equation \eqref{Pressure_B} combined with the third component of equation \eqref{Gerstner} evaluated on $r_0(s)$, give the following expression for the pressure on the thermocline
\begin{equation}\label{Pressure1_tc}
P(q-ct,s,\eta_{0}) =P_0-\rho_+\frac{\kappa\beta c}{2}s^2+\rho_{+}(2\Omega c-g)\left(-d_0+r_0(s)-\frac{1}{k}e^{-\xi_{0}}\cos(\theta)\right)
\end{equation}
where $\xi_{0}=k(r_0(s)+f(s))$. Alternatively, the pressure in the layer $\mathcal{M}(t)$ given by \eqref{Pressure_A} and restricted to the thermocline $r=r_0(s)$ also yields
\begin{equation}\label{Pressure2_tc}
\begin{aligned}
P(q-ct,s,r_0(s))&=\tilde{P}_{0}+\rho_{0}\frac{kc^2-2\Omega c+g}{k}e^{-\xi_0}\cos\theta\\&\qquad+\rho_{0}\frac{kc^2-2\Omega c}{2k_0}e^{-2\xi}-\rho_{0}gr_0(s) .
\end{aligned}
\end{equation}
Since the pressure is continuous throughout the flow at all times, it follows from expressions \eqref{Pressure1_tc}--\eqref{Pressure2_tc} that
\begin{equation}\label{Pressure_transition_tc}
\rho_+(g-2\Omega c)=\rho_0(kc^2-2\Omega c+g),
\end{equation}
from which we conclude
\begin{equation}\label{wave-speed}
  c=\frac{\rho_{+}-\rho_{0}}{k\rho_{0}}\left(\sqrt{\Omega^2+kg\frac{\rho_{0}}{\rho_{+}-\rho_{0}}}-\Omega\right)>0,
\end{equation}
having retained only the positive root since the waves travel eastward. Crucially this dispersion relation highlights the necessity of a heterogeneous fluid body for the existence of these internal waves, since it is immediately clear that if $\rho_+=\rho_0$, then $c=0$ and the internal waves no longer exist. Decay of the internal wave with meridional distance from the equator is assured when $kc-2\Omega>0$, and so it follows from \eqref{wave-speed} that
\begin{equation}
    k>\frac{4\Omega^2}{g}\left(1+\frac{\rho_{0}}{\rho_{+}-\rho_{0}}\right),
\end{equation}
which combined with inequality \eqref{relative_density}, ensures that a maximal wavelength for internal waves is of order
$L=\frac{2\pi}{k}\approx1.2\times10^{7}\mathrm{m}. $

\subsection{Existence of extreme waves in the thermocline}\label{existence}
Equating the expressions given in equations \eqref{Pressure1_tc}--\eqref{Pressure2_tc} ensures the continuity of the pressure across the interface of the the layer $\mathcal{M}(t)$ and the uniform current layer. The equivalence of these two expression allows us to deduce an implicit relation for the function $r_0(s)$ of the form
\begin{multline}\label{Implicit1}
\rho_+\frac{\kappa\beta c}{2}s^2+\rho_0(kc^2-2\Omega c)\left(\frac{1}{2k}e^{-2k(r_0(s)+f(s))}+r_0(s)\right)\\=P_{0}-\tilde{P}_{0}+\rho_{+}gd_{0}-\rho_{+}\Omega c(2d_0-D-d).
\end{multline}
Let us denote
\[G(s.r)=\rho_+\frac{\kappa\beta c}{2}s^2+\rho_0(kc^2-2\Omega c)\left(\frac{1}{2k}e^{-2k(r+f(s))}+r\right),\]
and differentiating with respect to $r$ we find
\begin{equation}\label{Implicit-derivative}
 G_{r}(s,r)=\rho_{0}(kc^2-2\Omega c)(1-e^{-2k(r+f(s))}) > 0,\text{ when }r>0.
\end{equation}
In particular, smooth nonlinear flows always satisfy $r_0(s)\geq r_0^*>0$ for all $s\in(-s_0,s_0)$, and so under these circumstances it is clear that $r\mapsto G(s,r)$ is a strictly increasing diffeomorphism from the open interval $(0,\infty)$ onto the open interval $\left(\rho_+\frac{\kappa\beta c}{2}s^2+\rho_0\frac{(kc^2-2\Omega c)}{2k}e^{-2kf(s)},\infty\right)$. This in turn ensures the implicit function theorem may be used to infer the existence of a solution $r=r_0(s)$ of equation \eqref{Implicit1}. In addition we observer that $G(s,r)$ is effectively a function of $s^2$ only, and so the implicit function theorem ensures $r_0(s)$ is effectively a function of $s^2$ also, in which case it is clear that $r_{0}(s)$ is an even function of $s$, see \cite{Con2014} for further discussion.

Extreme waves emerge from \eqref{Gerstner} only when $r_0(s)\geq r_0^*=0$ with $r_0(s)=0$ along the equator $s=0$ (in contrast $r_0^*>0$ for smooth waves).  In this scenario equation \eqref{Implicit-derivative} now becomes $G_r(s,r)\geq0$ for $r\geq r_0$ and so significantly $G(s,r)$ is no longer a strictly increasing function of $r$ for all $s\in(-s_0,s_0)$ (see also the discussion after equation \eqref{Jacobian}). Equivalently, the fact that $G_r(s,r)$ is not strictly positive everywhere essentially means $G(s,r)$ is not automatically an injective function of $r$ for each fixed $s\in(-s_0,s_0)$, and so the implicit function theorem cannot be automatically applied to infer a solution of the relation \eqref{Implicit1}.
Nevertheless, even in the case of extreme waves it can still be shown for any fixed $s\in(-s_0,s_0)$ that the function
$r\mapsto G(s,r)$ is an injective map as follows:
\begin{lemma}\label{lem1}
Given any fixed $s$ the function
\[
\begin{aligned}
&G(s,\cdot):[0,\infty)\to\left[\rho_+\frac{\kappa\beta c}{2}s^2+\rho_0\frac{(kc^2-2\Omega c)}{2k}e^{-2kf(s)},\infty\right)\\
&G(s,\cdot):r\mapsto\rho_+\frac{\kappa\beta c}{2}s^2+\rho_0(kc^2-2\Omega c)\left(\frac{1}{2k}e^{-2k(r+f(s))}+r\right)
\end{aligned}
\]
is injective.
\end{lemma}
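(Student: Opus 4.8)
The plan is to show that $G(s,\cdot)$ is strictly increasing on $[0,\infty)$, since a strictly monotone function is automatically injective. The only reason strict monotonicity is not immediate is that the derivative $G_r(s,r)$, computed in \eqref{Implicit-derivative}, is merely nonnegative on $[0,\infty)$ rather than strictly positive: it vanishes precisely at the single point $r=0$ in the degenerate case $s=0$. The entire content of the lemma is to argue that this isolated degeneracy does not destroy injectivity.

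First I would pin down the signs of the relevant quantities. From the decay condition $kc-2\Omega>0$ together with $c>0$ (guaranteed by \eqref{wave-speed}), the prefactor satisfies $kc^2-2\Omega c=c(kc-2\Omega)>0$. Likewise, the decay function \eqref{Decay_function} gives $f(s)=\frac{\beta s^2}{2(kc-2\Omega)}\geq 0$, with equality if and only if $s=0$, since $\beta>0$. Consequently $r+f(s)>0$ for every $r>0$ and every $s$, so that $e^{-2k(r+f(s))}<1$ and hence, from \eqref{Implicit-derivative},
\[
G_r(s,r)=\rho_0(kc^2-2\Omega c)\bigl(1-e^{-2k(r+f(s))}\bigr)>0\quad\text{for all }r>0.
\]
In particular $G_r(s,\cdot)$ is strictly positive on the open half-line $(0,\infty)$ and vanishes only at $r=0$, and then only when $s=0$.

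With these signs in hand, strict monotonicity follows by integration. For any $0\leq r_1<r_2$, the fundamental theorem of calculus gives
\[
G(s,r_2)-G(s,r_1)=\int_{r_1}^{r_2}G_r(s,r)\,dr,
\]
and the continuous integrand is strictly positive throughout $(r_1,r_2)$ except possibly at the single endpoint $r=0$. A nonnegative continuous integrand that is positive off a set of measure zero still yields a strictly positive integral, so $G(s,r_2)>G(s,r_1)$. Thus $G(s,\cdot)$ is strictly increasing, and therefore injective, on $[0,\infty)$ for every fixed $s$. I would also record that evaluating at $r=0$ reproduces the left endpoint $\rho_+\frac{\kappa\beta c}{2}s^2+\rho_0\frac{(kc^2-2\Omega c)}{2k}e^{-2kf(s)}$ of the codomain stated in the lemma, and that $G(s,r)\to\infty$ as $r\to\infty$, which together with continuity and monotonicity confirms the range.

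The main obstacle — indeed the only subtle point — is precisely the failure of the naive ``positive derivative implies strict increase'' argument at the degenerate point $r=0$, $s=0$, which is exactly the failure of the implicit function theorem flagged after \eqref{Implicit-derivative}. One cannot simply invoke $G_r>0$ everywhere; the argument must instead exploit that the derivative is positive off a single point, which is what the integral formulation above supplies. Everything else is bookkeeping of the signs coming from the decay condition and the dispersion relation.
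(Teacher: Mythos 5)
Your proposal is correct and takes essentially the same approach as the paper: both proofs reduce injectivity to strict monotonicity of $G(s,\cdot)$, using the fact that $G_r(s,r)=\rho_0(kc^2-2\Omega c)\left(1-e^{-2k(r+f(s))}\right)>0$ for all $r>0$, so that the vanishing of the derivative at the single degenerate point $r=0$, $s=0$ cannot destroy strict increase. The only (interchangeable) technical difference is the bridging device: the paper splits into the cases $s\neq 0$ and $s=0$ and invokes the mean value theorem at an interior point $\gamma\in(r_1,r_2)$ where $G_r(0,\gamma)>0$, whereas you treat all $s$ uniformly via the fundamental theorem of calculus and the positivity of the integral of a continuous nonnegative integrand that is positive off a measure-zero set.
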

\begin{proof}
Away from the equator we see that
\begin{equation*}
 G_{r}(s,r)=\rho_{0}(kc^2-2\Omega c)(1-e^{-2k(r+f)}) >0,
\end{equation*}
since $\frac{\beta s^2}{2(kc-2\Omega)}>0$ when $s\neq 0$ and hence $r\mapsto G(s,r)$ is a strictly increasing function when $r\in[0,\infty)$.

When $s=0$ we note that $G(0,r)$ is continuous for any $r\in[0,\infty)$ and differentiable for any $r\in(0,\infty)$. Hence, the mean value theorem ensures that for any pair of values $0\leq r_1<r_2<\infty$, there exists some $\gamma\in(r_1,r_2)$ such that
\begin{equation}
\begin{aligned}
  \frac{G(0,r_2)-G(0,r_1)}{r_2-r_1}&=\rho_{0}(kc^2-2\Omega c)\left(1-e^{-2k(\gamma)}\right)>0
\end{aligned}
\end{equation}
since $0\leq r_1<\gamma $ and $kc^2-2\Omega c>0$. Thus $G(0,r)$ is injective for $r\in[0,\infty).$
It follows that the function $G(s,r)$ is an injective function of $r$ for every fixed $s$.
\end{proof}
Using the fact that $G(s,r)$ is an injective function on the interval $r\in[0,\infty)$ we may now show that there exists a solution of the implicit relation \eqref{Implicit1}.

\begin{theorem}\label{MainResult}
There exists an $s_0>0$ such that for  any $s\in(-s_0,s_0)$ the implicit relation
\begin{equation*}
G(s,r)=P_{0}-\tilde{P}_{0}+\rho_{+}gd_{0}-\rho_{+}\Omega c(2d_0-D-d)
\end{equation*}
has a unique solution $r=r_0(s)\geq 0$ such that $r(0)=0$. Moreover, $r_0(s)$ is a continuous function for $s\in(-s_0,s_0)$.
\end{theorem}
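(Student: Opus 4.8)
The plan is to treat the displayed equation as a one–parameter family of scalar equations $G(s,\cdot)=C$, where I abbreviate the right–hand constant by $C:=P_{0}-\tilde{P}_{0}+\rho_{+}gd_{0}-\rho_{+}\Omega c(2d_0-D-d)$, and to solve each one by monotonicity rather than by the implicit function theorem (unavailable at $s=0$ precisely because $G_r(0,0)=0$). Uniqueness is immediate: Lemma \ref{lem1} already shows that $r\mapsto G(s,r)$ is injective on $[0,\infty)$ for every fixed $s$, so the family has at most one nonnegative root for each $s$. The requirement $r_0(0)=0$ then forces a calibration of the constant: evaluating $G$ at $(s,r)=(0,0)$ and using $f(0)=0$ gives $G(0,0)=\rho_0\frac{kc^2-2\Omega c}{2k}$, so I would take $C=G(0,0)$, which is exactly the extreme–wave normalisation $r_0^{*}=0$ discussed after equation \eqref{Jacobian}. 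The remaining work is existence for $s$ near $0$ and continuity of the resulting $r_0$.

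For existence I would fix $s$ and record that $r\mapsto G(s,r)$ is continuous on $[0,\infty)$, strictly increasing (by \eqref{Implicit-derivative} when $s\neq0$, and by the mean–value argument of Lemma \ref{lem1} when $s=0$), and tends to $+\infty$ as $r\to\infty$ because the linear term $\rho_0(kc^2-2\Omega c)r$ dominates the decaying exponential. Hence $G(s,\cdot)$ maps $[0,\infty)$ bijectively onto $[G(s,0),\infty)$, and the equation $G(s,r)=C$ has a unique nonnegative root exactly when $C\geq G(s,0)$. Since $C=G(0,0)$, this reduces to verifying $h(s):=G(s,0)\leq h(0)$ for $|s|<s_0$, i.e.\ that $s=0$ is a local maximum of $h$. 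A short computation using $f'(s)=\tfrac{\beta s}{kc-2\Omega}$ and $kc^2-2\Omega c=c(kc-2\Omega)$ gives $h'(s)=\beta c\,s\bigl(\rho_+\kappa-\rho_0 e^{-2kf(s)}\bigr)$, so $h'(0)=0$, and $h$ has a strict local maximum at the equator as soon as $\rho_+\kappa<\rho_0$ (the bracket being negative for small $s$, since $e^{-2kf(s)}\to1$). This fixes a threshold $s_0>0$ below which $C\geq G(s,0)$ and the solution $r_0(s)\geq0$ exists.

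For continuity I would first extract from $G(s,r_0(s))=C=\rho_0\frac{kc^2-2\Omega c}{2k}$ the uniform a priori bound
\[
r_0(s)+\frac{1}{2k}e^{-2k(r_0(s)+f(s))}=\frac{1}{2k}-\frac{\rho_+\kappa\beta c}{2\rho_0(kc^2-2\Omega c)}\,s^2\leq\frac{1}{2k},
\]
whence $0\leq r_0(s)<\tfrac{1}{2k}$ for every $s$. Then, for any $s_*\in(-s_0,s_0)$ and any sequence $s_n\to s_*$, the bounded sequence $r_0(s_n)$ has a convergent subsequence with some limit $L\in[0,\tfrac{1}{2k}]$; joint continuity of $G$ gives $G(s_*,L)=C$, and injectivity (Lemma \ref{lem1}) forces $L=r_0(s_*)$. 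As every subsequential limit equals $r_0(s_*)$, the full sequence converges, so $r_0$ is continuous on $(-s_0,s_0)$; applied at $s_*=0$ this also yields $r_0(s)\to0=r_0(0)$, recovering the stated value at the equator.

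The main obstacle, as the paper emphasises, is the degeneracy $G_r(0,0)=0$, which removes the implicit function theorem exactly at the crest of the cusp. My existence step sidesteps it through the monotone–inverse and intermediate–value argument, but this transfers the difficulty to the sign of $h(s)-h(0)$: the solvability condition $C\geq G(s,0)$ is not automatic and rests on the density–and–$\kappa$ inequality $\rho_+\kappa<\rho_0$ (equivalently $\kappa<\rho_0/\rho_+$), which I expect to be part of the standing physical hypotheses. Continuity at $s=0$ is the second delicate point, since the implicit function theorem fails there as well; I would handle it uniformly by the compactness–plus–uniqueness argument above rather than by local inversion, the latter working only on $s\neq0$.
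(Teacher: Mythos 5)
You have it right, and your argument shares the paper's global strategy---bypass the implicit function theorem (which degenerates since $G_r(0,0)=0$) by combining the injectivity of Lemma \ref{lem1} with intermediate-value reasoning---but it diverges at the two delicate steps, and at both your version is the more complete one. For existence, the paper fixes $r_1>0$, uses continuity in $s$ to preserve $G(s,r_1)>G(0,0)$ on a small interval, and then asserts that continuity plus injectivity of $r\mapsto G(s,r)$ produce a root in $(0,r_1)$; that deduction silently requires the lower-endpoint inequality $G(s,0)\leq G(0,0)$, without which the increasing function $G(s,\cdot)$ stays above $G(0,0)$ on all of $[0,\infty)$ and there is no nonnegative root at all. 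Your proof makes this solvability condition explicit and verifies it: setting $h(s)=G(s,0)$, you compute $h'(s)=\beta c s\bigl(\rho_+\kappa-\rho_0e^{-2kf(s)}\bigr)$, so $s=0$ is a strict local maximum of $h$ precisely when $\rho_+\kappa<\rho_0$. This is the inequality the paper only derives afterwards, as \eqref{kappa_inequality} in Section \ref{PhysicalFlow}, from the geometry of the profile; your analysis shows it is already needed for the existence statement, so you should record $\kappa<\rho_0/\rho_+$ as an explicit hypothesis of the theorem (read literally with only $\kappa>0$, the statement is not provable: your own computation shows solvability fails for small $s\neq 0$ when $\rho_+\kappa>\rho_0$). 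For continuity, the paper essentially repeats its existence argument and remains vague; your route---the a priori bound $0\leq r_0(s)<\frac{1}{2k}$ read off from the implicit relation, compactness of bounded sequences, joint continuity of $G$, and injectivity to identify every subsequential limit with $r_0(s_*)$---is a standard, fully rigorous replacement, and it treats continuity at $s=0$ (where local inversion is impossible) on the same footing as everywhere else, recovering $r_0(s)\to 0$ as $s\to 0$ in the bargain.
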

\begin{proof}
Firstly, we note since the right-hand side of equation \eqref{Implicit1} is constant and that $r_0(0)=0$, then equivalently we want to show there exists a continuous solution $r_0(s)$ to the implicit relation $G(s,r)=G(0,0)$.
Let us consider the behaviour of $G$ along the line segment
\[\left\{(0,r):0\leq r<\infty\right\}.\]
 \begin{figure}[h!]
	\centering
 	\includegraphics[width=0.75\textwidth]{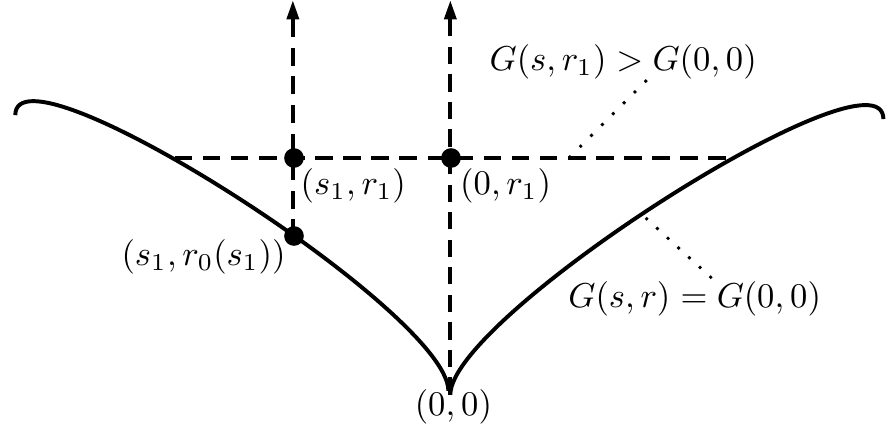}
 	\caption{A pictorial outline for the proof of Theorem \ref{MainResult}, confirming the existence of a solution of the implicit relation $G(s,r)=G(0,0)$ of extreme form.}\label{proof_figure}
 \end{figure}
It follows from Lemma \ref{lem1} that for any $r_1>0$ then $G(0,r_1)>G(0,0)$. Noting that $G(s,r)$ is continuous everywhere, it follows that an interval $(-s_0,s_0)$ may always be chosen so that
 \begin{equation}
   G(s,r_1)>G(0,0), \text{ for }s\in(-s_0,s_0),
 \end{equation}
see Figure \ref{proof_figure}. Now fixing some $s_1\in(-s_0,s_0)$ with $s_1\neq0$, we have
 \begin{equation}
  G(s_1,r_1)>G(0,0)
 \end{equation}
again. Moreover, since $G(s_1,r)$ is a continuous and injective function of $r$ (cf. Lemma \ref{lem1}), it follows that there is a unique $r_0(s_1)\in(0,r_1)$ such that
 \begin{equation}
  G(s_1,r_0(s_1))=G(0,0),
 \end{equation}
as illustrated in Figure \ref{proof_figure}. Hence, for all $s\in(-s_0,s_0)$, there exists a function $r_0(s)$ such that $G(s,r_0(s))=G(0,0)$.

The continuity of $r_0(s)$ may be seen to follow in a similar way. Fixing some $s_{1}\in(-s_{0},s_{0})$, to confirm the continuity we must show that for an arbitrarily small $\epsilon$ there is a a corresponding $\delta>0$ such that
\[\left\vert{r_{0}(s)-r_{0}(s_1)}\right\vert<\epsilon,\text{ for all }s\in(s_1-\delta,s_1+\delta).\]
However, we have already shown that an interval $(-s_0,s_0)$ may always be chosen which ensures the existence of some function $r(s)$ for each $s\in(-s_0,s_0)$. Consequently we may always choose an appropriate sub-interval $(s_1-\delta,s_1+\delta)\subset(-s_0,s_0)$ for each $s_1\in(-s_0,s_0)$, so that the continuity of $r(s)$ is ensured (see \cite{Cou2011} for further discussion).
\end{proof}

\section{Physical features of the flow}\label{PhysicalFlow}
\subsection{The profile near the equator}
Given any point $(s,r_0(s))$ where $G_{r}(s,r_0(s))\break>0$,  the derivative of $r_0(s)$ with respect to $s$ is given by
\begin{equation}\label{r_derivative}
 r^{\prime}(s)=-\left.\frac{G_{s}(s,r)}{G_r(s,r)}\right\vert_{r=r_0(s)}=-\frac{\beta cs(\rho_+\kappa-\rho_0 e^{-2\xi_0})}{\rho_0(kc^2-2\Omega c)(1-e^{-2\xi_0})}.
\end{equation}

\noindent We recall that Lemma \ref{lem1} and Theorem \ref{MainResult} require $r\in[0,\infty)$, and so it follows that $r_{0}(s)$ must not decrease, at least initially, as we move away from the equator. Hence, it follows from equation \eqref{r_derivative} that $r_0^{\prime}(s)\leq0$ for $s<0$ and $r_0^{\prime}(s)\geq0$ for $s>0$, at least about the equator.
This in turn requires
\begin{equation*}
 \rho_+\kappa-\rho_0e^{-2\xi_0}\leq0, \text{ for all } s\neq 0
\end{equation*}
from which it follows that
\begin{equation}\label{kappa_inequality}
\kappa<\frac{\rho_0}{\rho_+}<1.
\end{equation}
We note from equation \eqref{r_derivative} if $\kappa=\frac{\rho_0}{\rho_+}$ then $r_0^{\prime}(s)\leq0$ when  $s\geq0$ and vice-versa, meaning $r_0(s)$ decreases in moving away from the equator, hence the necessity for a strict inequality in \eqref{kappa_inequality}.

When $\kappa<\frac{\rho_0}{\rho_+}$, equation \eqref{Implicit1} has an exact solution of the form
\begin{equation}
r(s)=r_0^*+\frac{1}{2k}-\kappa\frac{\rho_+}{\rho_0}\frac{\beta}{(kc-2\Omega)}s^2+W\left(-e^{-1-2k(1-\kappa\frac{\rho_+}{\rho_0})\frac{\beta}{kc-2\Omega}s^2}\right),
\end{equation}
where $W(\cdot)$ is the Lambert $W$ function, defined by
\[z=W(z)e^{W(z)}\quad \text{ for all }z\in\mathbb{C},\]
The function $W(x)$ is real valued for any $x\in [-\frac{1}{e},\infty)$. The function is double valued on the interval $[-\frac{1}{e},0)$ with an increasing and decreasing branch in this interval, while the function is single valued on the interval $[0,\infty).$ An interesting application of the Lambert $W$ function for a convective Rayleigh-Benard model may be found in \cite{VPT2018}. In Figure \ref{Thermocline_graphs} in the left panel, the graph of $(s,r_0(s))$ is shown for a wave length $L=200\, $m and as expected the function is initially increasing when moving away from the equator. In the right panel, the cross section $(Y(q,s,r_(s),t),Z(q,s,r_0(s),t))$ is illustrated and again we see the shape of $r_0(s)$ reflected in this graph, except at the wave crest where the profile becomes smooth. In Figure \ref{Thermocline_surface} the thermocline is illustrated from beneath to emphasise the cusp at the wave trough, one of the hallmarks of extreme waves. We also note from the figures that the wave profile does not appear to be differentiable at the wave trough, again a common feature of extreme waves.
\begin{figure}[h!]
\includegraphics[width=0.5\textwidth]{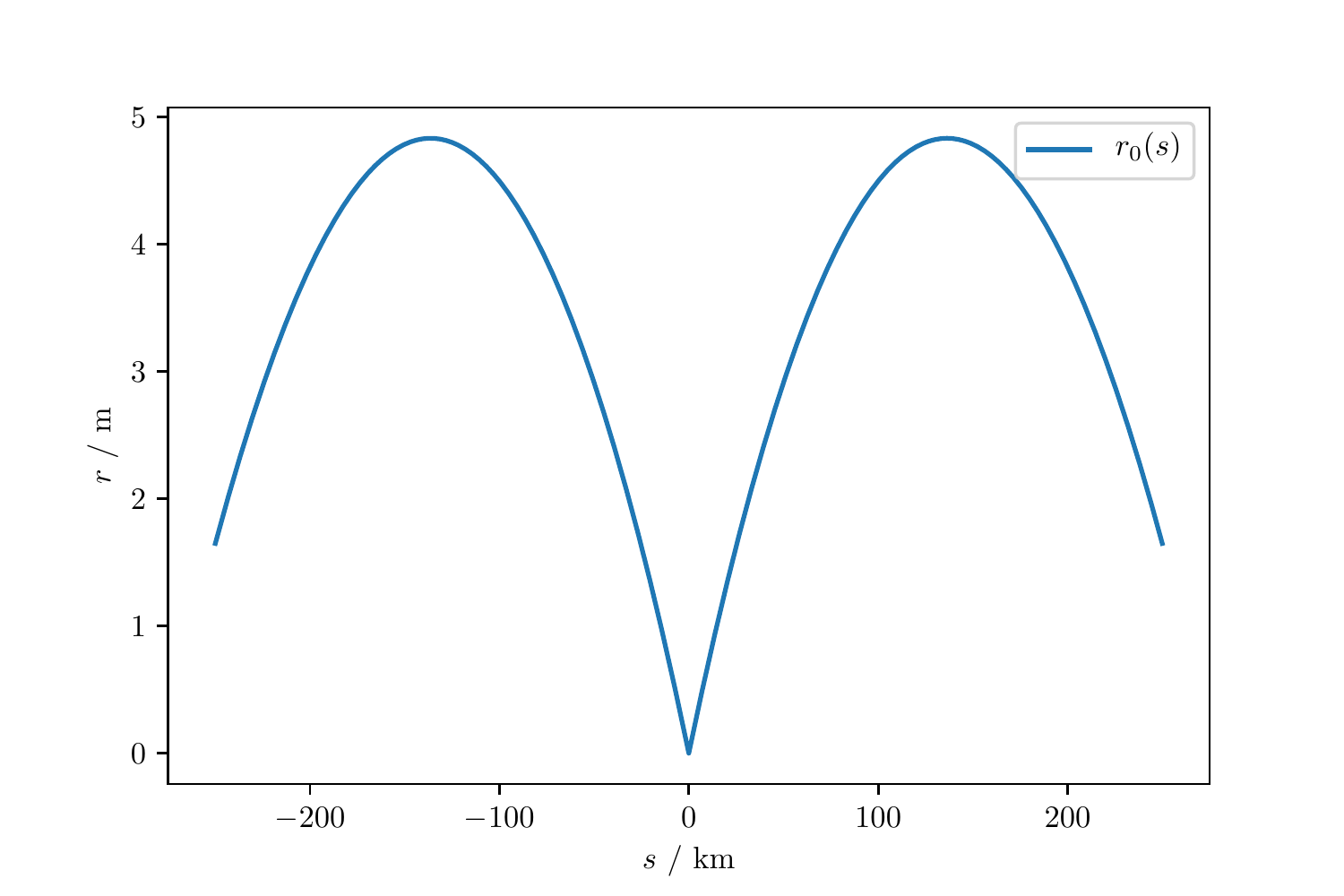}\includegraphics[width=0.5\textwidth]{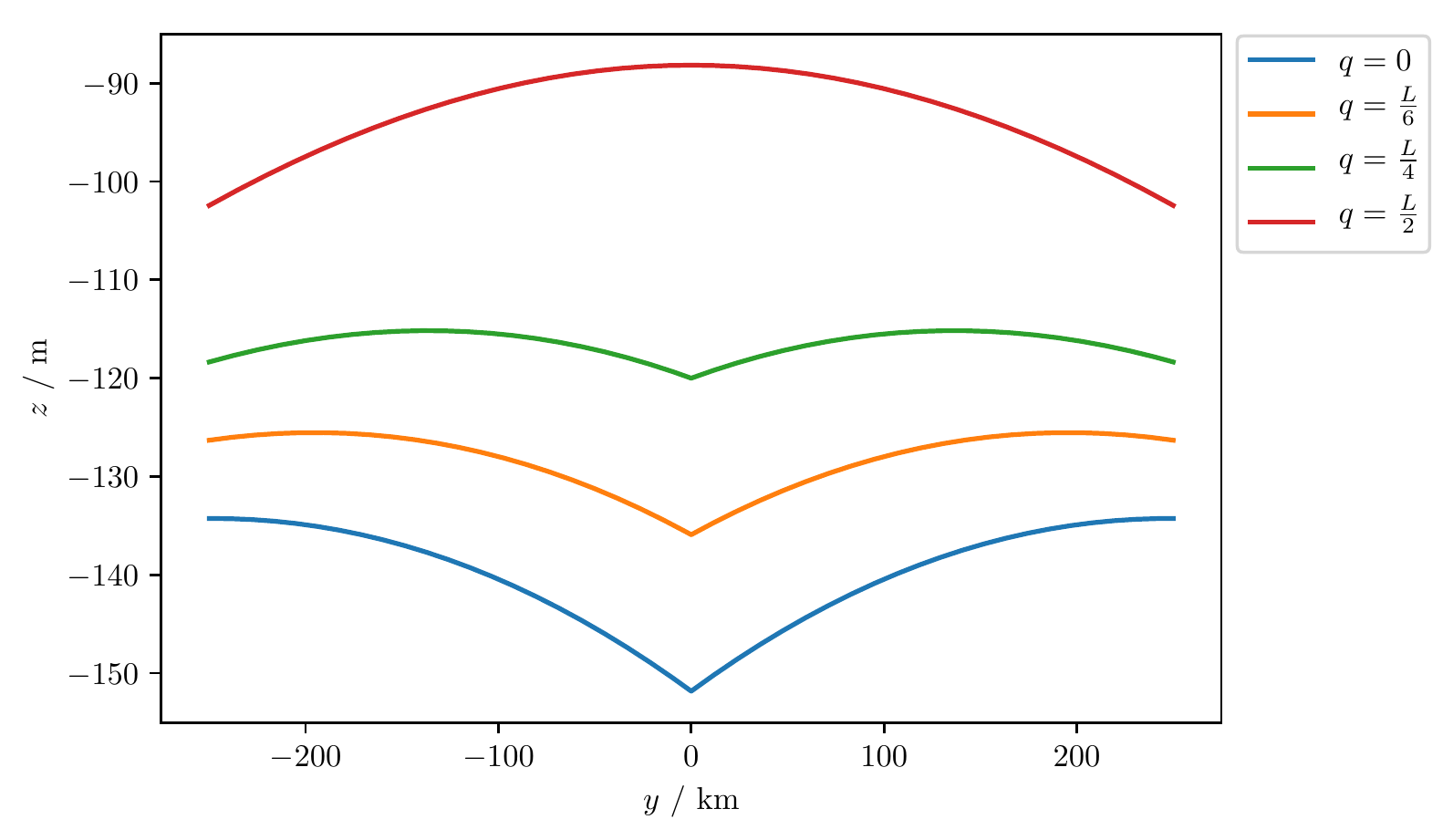}
\caption{The left-hand panel shows the graph $(s,r_0(s))$ when $\kappa=0.5$ and $L=200$ meters. The parameter $d_0=120$ ensures a mean equatorial depth of $120\, $m. In the right-hand panel the corresponding thermocline profile in the fluid domain, at the equatorial locations indicated in the figure.}\label{Thermocline_graphs}
\end{figure}
\begin{figure}[h!]
\centering
\includegraphics[width=\textwidth]{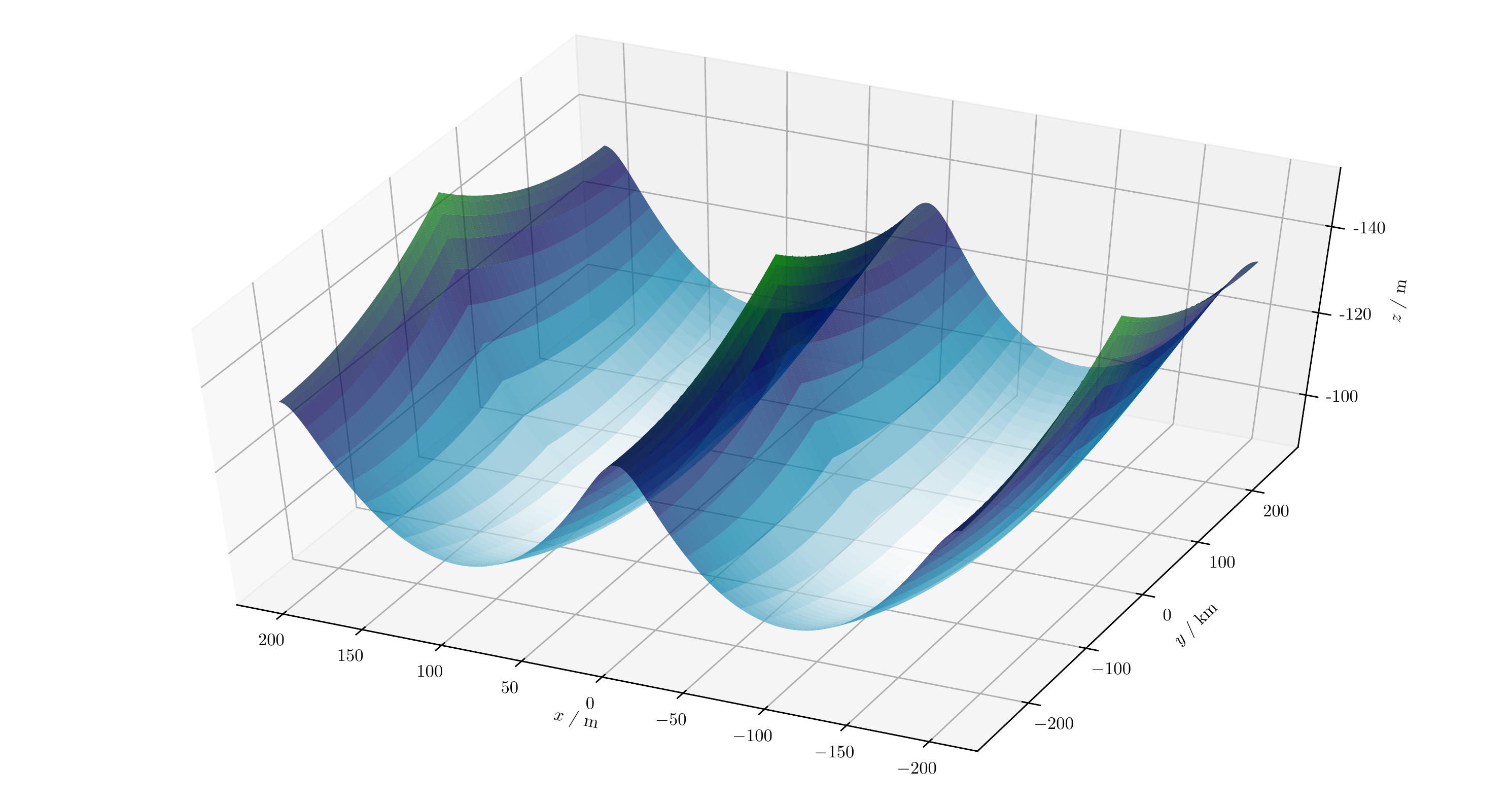}
\caption{The thermocline surface when $L=200\, $m, $r_0^*=0\, $m and $d_0=120\, $m, with $\kappa=0.5$. The surface is viewed from beneath and the value of $\kappa$ has been chosen to emphasise the extreme behaviour of the wave along the equator at the wave-troughs.}\label{Thermocline_surface}
\end{figure}

When the wavelength $L$ is sufficiently small, there are values $s\in(-s_0,s_0)$ such that the solution $r_0(s)<0$, for instance when $L=100\, $m we find that $r_0(s)<0$ when $\vert s\vert \gtrsim 234\,$km. However, we also note that for $|s|\gtrsim234\, $km then $f(s)+r(s)>0$, meaning the details of Lemma \ref{lem1} and Theorem \ref{MainResult} remain valid, and crucially the graph of $(X(q,r_0(s),s,t),Z(q,r_0(s),s,t))$ does not intersect itself. Thus, it is only near the equator that we must insist $r_0(s)>0$, in line with our considerations above.

\subsection{Current flow reversal}
In \cite{Con2014} it was shown that the mean Eulerian velocity of the solution \eqref{Gerstner} induces a westward flowing current in the near surface region $M(t)$, and owing to the fact that current in the layer beneath the thermocline is eastward, it was found that such behaviour reflects the features of the equatorial undercurrent (with relevant data in \cite{Wac1988}). We shall now demonstrate that this property of the dynamic solution is also a feature of the flow in the extreme case, thus maintaining one of the most attractive features of the flow from a geophysical perspective.

At any given latitude $s\in(-s_0,s_0)$, any fixed depth $z=z_0$ within the layer $M(t)$ may be written in terms of Lagrangian variables via the solution \eqref{Gerstner}, in the form
\begin{equation}\label{constant-depth}
 z_0=-d_0+r+\frac{1}{k}e^{-k(r+f(s))}\cos(k(q-ct)),
\end{equation}
whose solution is defined implicitly via
\begin{equation}\label{implicit-relation2}
 r=\zeta(q,s,z_0,t).
\end{equation}
Since the thermocline $r(s)=r_0^{*}$ is the lower boundary of this layer, it follows that $\zeta(q,s,z_0,t)\geq r_{0}^{*}$ for all $q,s,t$ and $z\geq\eta_0(x,y,t)$. Replacing \eqref{implicit-relation2} in \eqref{constant-depth} and differentiating with respect to $q$ we deduce
\begin{equation}\label{alpha_q}
\zeta_q=-\frac{e^{-\xi}\sin\theta}{1+e^{-\xi}\cos\theta}.
\end{equation}
The mean horizontal flow at this latitude $s$ and depth $z_0$ is then given by
\begin{equation}
 c+\bar{u}(s,z_0)=\frac{1}{T}\int_0^{T}\left[c+u(x-ct,s,z_0)\right]dt.
\end{equation}
Using
\[t=\frac{L}{c}\qquad dx=cdt\]
along with the change of variables (having fixed $s$ and $z$)
\[dx=\frac{\partial x}{\partial q}dq,\]
it follows from \eqref{Gerstner}, \eqref{velocity} and \eqref{alpha_q} that the mean horizontal current is given by the integral
\begin{equation}
 \bar{u}(s,z_0)=-\frac{c}{L}\int_{0}^{L}e^{-2\xi}dq \Rightarrow -c<\bar{u}(s,z_0)<0,
\end{equation}
meaning there is a net westward flow in the layer $M(t)$. Thus there is a flow reversal in the transition across the thermocline when the profile is smooth or of extreme form.

\section*{Acknowledgments}
The author is grateful to the referees for several helpful comments.
The author is grateful to the organisers of the programme ``Mathematical Aspects of Physical Oceanography (January 2019 -- March 2019)'' at the Erwin Schr\"{o}dinger Institute, Vienna, Austria.

\medskip
Received August 2018; revised October 2018.
\medskip

\end{document}